\newcommand{\rmv}{\hspace*{-.3mm}}
\newtheorem{Theorem}{Proposition}
\newcommand{\fig}[1]{\hyperref[#1]{Fig.~\ref*{#1}}}
\newcommand{\tab}[1]{\hyperref[#1]{Tab.~\ref*{#1}}}
\newcommand{\equ}[1]{\hyperref[#1]{(\ref*{#1})}}
\newcommand{\alg}[1]{\hyperref[#1]{Alg.~\ref*{#1}}}
\newcommand{\app}[1]{\hyperref[#1]{App.~\ref*{#1}}}
\newcommand{\chap}[1]{\hyperref[#1]{Chapter~\ref*{#1}}}
\newcommand{\sect}[1]{\hyperref[#1]{Section~\ref*{#1}}}
\definecolor{WernerRed}{RGB}{233,71,72}
\definecolor{WernerBlue}{RGB}{87,136,175}
\definecolor{WernerGreen}{RGB}{112,190,109}
\begin{document}

\title{\LARGE{Design and Analysis of Efficient Maximum/Minimum Circuits for Stochastic Computing}}

\author{Michael~Lunglmayr,~\IEEEmembership{Member,~IEEE,}
        Daniel~Wiesinger,
  Werner~Haselmayr,~\IEEEmembership{Member,~IEEE}
\IEEEcompsocitemizethanks{\IEEEcompsocthanksitem M. Lunglmayr and D. Wiesinger are with the Institute
of Signal Processing, Johannes Kepler University Linz, Austria.\protect\\
E-mail: michael.lunglmayr@jku.at
\IEEEcompsocthanksitem W. Haselmayr is with the Institute for Communications Engineering and RF-Systems, Johannes Kepler University Linz, Austria.}
\thanks{This work has been submitted to the IEEE for possible publication. Copyright may be transferred without notice, after which this version may no longer be accessible.}}

\maketitle


\begin{abstract}
  In stochastic computing (SC), a real-valued number is represented by a stochastic bit stream, encoding its value in the probability of obtaining a one. This leads to a significantly lower hardware effort for various functions and provides a higher tolerance to errors (e.g.,~bit flips) compared to binary radix representation. The implementation of a stochastic max/min function is important for many areas where SC has been successfully applied, such as image processing or machine learning (e.g.,~max pooling in neural networks). In this work, we propose a novel shift-register-based architecture for a stochastic max/min function. We show that the proposed circuit has a significantly higher accuracy than state-of-the-art architectures at comparable hardware cost. Moreover, we analytically proof the correctness of the proposed circuit and provide a new error analysis, based on the individual bits of the stochastic streams. Interestingly, the analysis reveals that for a certain practical bit stream length a finite optimal shift register length exists and it allows to determine the optimal length.
\end{abstract}

\begin{IEEEkeywords}
  Stochastic computing, sequential logic, finite state machine, stochastic max/min function
\end{IEEEkeywords}

\section{Introduction}
\label{sec:intro}


%


\IEEEPARstart{S}{\lowercase{ochastic}} computing (SC) is a promising computing paradigm, 
which represents a real-valued number by a stochastic stream \cite{Gaines_69,Alaghi_13,Alaghi_17}. The value is encoded by the probability of obtaining a one in the stream. Compared to binary radix representation, the stochastic representation leads to low hardware cost and high fault tolerance to circuit noise and bit flips \cite{Li_14,Najafi_16}. 

In SC, basic arithmetic operations can be realized with simple combinational logic \cite{Gaines_69}. 
Moreover, combinational logic can be synthesized to implement arbitrary polynomial functions, by manipulating them into a Bernstein polynomial with coefficients in the unit interval \cite{Qian_11, Qian_11_1, Qian_08}. 
However, in order to implement more complex (non-linear) functions, sequential logic is required \cite{Brown_01, Li_14}. In particular, linear finite-state machines (FSM) have been proposed to implement complex functions, which can be realized by either 
employing saturating up/down counters or shift registers~\cite{Ting_17}. The stochastic exponentiation and the tanh function were presented in \cite{Brown_01} and the absolute value as well as exponentiation based on an absolute value were proposed in \cite{Li_14}. Recently, a new synthesis method which allows to implement arbitrary functions using FSM-based elements was introduced in \cite{Najafi_17}.

In recent years, SC has been successfully applied to a variety of applications such as decoding of modern error correcting codes \cite{LDPC_1,Dong_10,Lee_15}, control systems \cite{Marin_02,Zhang_08}, image processing~\mbox{\cite{Li_11, Alaghi_13_1,Li_14_1}}, filter design \cite{Ichihara_17,Chang_13}, and neural networks~\mbox{\cite{Ren_17,Yu_17,Brown_01,Brown_01_1}}. Most of these applications exploit the low complexity circuitry of SC in algorithms that do not require a high numerical precision of the final result.

In many of the aforementioned application domains, the efficient implementation of a stochastic max/min~(SMax/SMin) function is very important. Especially, for neural networks, where such functions are the key element in the max pooling layer \cite{Ren_17,Yu_17}. 
Two architectures for SMax functions\footnote{Both circuits can be easily converted to realize a SMin function.} have been proposed in literature~\mbox{\cite{Li_11, Yu_17}}. The implementation in \cite{Li_11} 
is based on a stochastic comparator, which requires a stochastic number generator (SNG) that is usually realized by a linear feedback shift register. In order to reduce the overhead of the SNG, an optimized SMax function was 
proposed in~\cite{Yu_17}. However, both approaches have only been validated empirically.

Thus, the first goal of this paper is to analytically prove the correctness of the SMax functions in \cite{Li_11, Yu_17}. Then, we propose a novel shift-register-based architecture for a stochastic SMax/SMin function and 
analytically prove its correctness. We show that the novel architecture provides a higher accuracy than \cite{Li_11, Yu_17} at comparable hardware cost. We provide a new error analysis of the proposed circuit, considering the individual bits of the stochastic streams. To the best of our knowledge, no such analysis has been done before for an FSM-based stochastic computing element. Based on the error analysis we show that 
for practical bit stream lengths a finite optimal shift register length exists. Moreover, we determine the optimal shift register size for certain bit stream lengths.

\section{Stochastic Computing Basics}
\label{sec:basics}

In this section, we briefly review the main principles of SC and introduce 
the basic computing elements used in this work. A comprehensive overview on SC can be found in \cite{Alaghi_13} and recent challenges 
and potential solutions are discussed in \cite{Alaghi_17}.

\subsection{Unipolar Coding Format} 
\label{subsec:unipolar_coding}

In the unipolar coding format\footnote{It is important to note that the circuits proposed in this work are also valid for the bipolar format, which enables the 
representation of negative values \cite{Gaines_69}.}, the value of a deterministic number $x \in [0,1]$ is encoded in a stochastic bit stream $X$ of length $N$. The individual bits in the stochastic 
stream are indicated by $X[i] \in \{0,1\}$. The probability for each bit in the stream to be one is given by $x = P_X = P(X[i] = 1)$. In practical realizations, the rate of ones in the stochastic bit stream is used to represent 
the number $x$
\begin{align}
  r(X) = \frac{\sum_{i=1}^N {X[i]}}{N} = \frac{o(X)}{N},
\end{align}

where $o(X) = \sum_{i=1}^N {X[i]}$ denotes the number of ones in the stream. The precision (representation resolution) of the unipolar format is given by $1/N$. 
Thus, $r(X) = x$ only if $N \rightarrow \infty$, otherwise $r(X)$ is only an approximation of~$x$. 


\begin{figure}[t!]
\centering
  \includegraphics[scale=0.7]{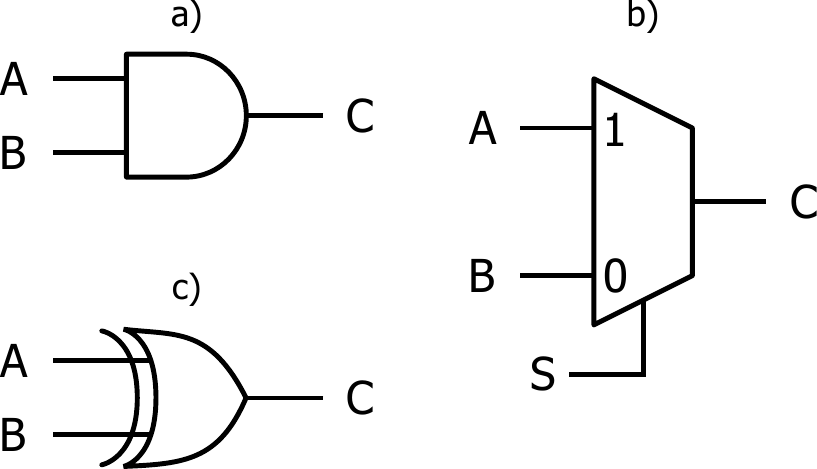}
  \caption{Stochastic circuits for (a) multiplication, (b) scaled-addition and (c) non-scaled addition and subtraction.}
  \label{fig:basic_sc_circuits}
\end{figure}
\subsection{Combinational Logic-based SC Elements} 
\label{subsec:comb_sce}

Certain arithmetic operations in SC can be implemented by single combinational elements, for example scaled addition and multiplication 
can be realized using a multiplexer and an AND gate, respectively. Moreover, an XOR gate with the input streams  $A$ and $B$
implements the function $A + B - 2AB$, which involves addition and subtraction. In the following, we briefly explain the principles of the aforementioned operations, 
as they are the main building blocks of the SMax/SMin functions presented in Secs.~\ref{sec:soa_max_min_fct}~ and~\ref{sec:new_max_min_fct_architec}.

\subsubsection{Multiplication}
\label{subsubsec:stoch_mult}
The stochastic multiplication can be implemented using a simple AND gate as shown in Fig. \ref{fig:basic_sc_circuits}a. If we assume that the 
input stochastic streams $A$ and $B$ are uncorrelated and have the probabilities $P_A$ and $P_B$, then we have at the output
\begin{align}
  P_C = P_A P_B.
  \label{eq:and_out_prob}
\end{align}

For the unipolar format the values encoded by the stochastic streams $A$, $B$ and $C$ are $a=P_A$, $b=P_B$ and $c=P_C$, and, thus we obtain
\begin{align}
   c =  ab.
  \label{eq:and_out}
\end{align}

\subsubsection{Scaled Addition}
\label{subsubsec:stoch_add}
The stochastic circuit for scaled addition, a multiplexer, is shown in Fig.~\ref{fig:basic_sc_circuits}b. If we assume that the stochastic streams $A$ and $B$ are uncorrelated with the stochastic stream $S$, and 
$P_A$, $P_B$~and~ $P_S$ are their corresponding probabilities, the output can be expressed as
\begin{align}
  P_C = P_S P_A + (1-P_S) P_B.
  \label{eq:mux_out_prob}
\end{align}

According to the unipolar format, we substitute $P_A$, $P_B$, $P_C$ and $P_S$ by $a$, $b$, $c$ and $s$, and obtain the following output
\begin{align}
  c = s a + (1-s) b.
  \label{eq:mux_out}
\end{align}

In order to perform unbiased addition, $s$ is set to~$1/2$.

\subsubsection{Non-Scaled Addition and Subtraction}
\label{subsubsec:stoch_add_sub}
If we assume uncorrelated input stochastic streams $A$ and $B$, with the corresponding probabilities $P_A$, $P_B$, then according to 
the Boolean function of the XOR gate (cf. Fig. \ref{fig:basic_sc_circuits}c) we have at the output 
\begin{align}
  P_C = P_A (1 - P_B) + P_B (1 - P_A) = P_A + P_B - 2P_AP_B.
  \label{eq:xor_out_prob}
\end{align}

For the unipolar coding format (i.e. $a = P_A$, $b=P_B$ and $c = P_C$)  we can rewrite \eqref{eq:xor_out_prob} as follows
\begin{align}
  c = a + b - 2ab.
  \label{eq:xor_out}
\end{align}

\begin{figure}[t!]
\centering
  \includegraphics[scale=0.7]{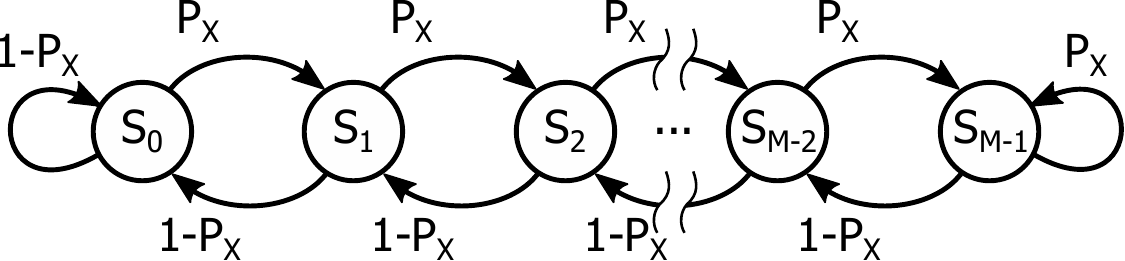}
  \caption{Linear finite state machine.}
  \label{fig:basic_lin_fsm}
\end{figure}
\subsection{FSM-based SC Elements} 
\label{subsec:fsm_sce}
Combinational logic can be used to realize polynomial functions of a specific form \cite{Qian_11} and to approximate non-polynomial functions, for example using the MacLaurin expansion \cite{Qian_08}. However, 
highly non-linear functions such as the exponential or the tanh function cannot be realized. Hence, FSM-based SC elements have been introduced~\cite{Brown_01,Li_14}. Here, we briefly review the 
stochastic tanh function\footnote{In \cite{Brown_01,Li_14} various other FSM-based SC elements are presented as well.}~(STanh) which builds the basis for the state-of-the art SMax/SMin functions  presented in~Sec. \ref{sec:soa_max_min_fct}.

\subsubsection{Stochastic Tanh Function}
\label{subsubsec:stanh}
Fig. \ref{fig:basic_lin_fsm} shows a linear FSM, with~$M$ states $S_0,\dots, S_{M-1}$ arranged in 
a linear form. The state transition process of the FSM can be modeled as a time-homogeneous irreducible and aperiodic Markov chain, which 
has a single steady state. The steady state probability is given by~\cite{Li_14}


\begin{align}
  P_i = \frac{\left(\frac{P_X}{1-P_X}\right)^i}{\sum\limits_{j=0}^{M-1} \left(\frac{P_X}{1-P_X}\right)^j},
  \label{eq:steady_state_prob}
\end{align}

where $P_X$ denotes the transition probability from state $S_i$ to $S_{i+1}$ (state is incremented) and $(1-P_X)$ indicates the transition 
from state $S_i$ to $S_{i-1}$ (state is decremented).

%

In order to realize the STanh function, the FSM output can be expressed as~\cite{Brown_01}
\begin{align}
  P_Z = \sum\limits_{i=M/2}^{M-1} P_i.
  \label{eq:fsm_output_tanh}
\end{align}

Substituting $P_i$ given in \eqref{eq:steady_state_prob} into \eqref{eq:fsm_output_tanh} results in \cite{Li_14}
\begin{align}
  P_Z 
      = \frac{\left(\frac{P_X}{1-P_X}\right)^{M/2}}{1 + \left(\frac{P_X}{1-P_X}\right)^{M/2}}.
  \label{eq:fsm_output_tanh_1}
\end{align}


If we substitute $P_X$ and $P_Z$ by $x$ and $z$ (unipolar coding format) we obtain~\cite{Li_14}
\begin{align}
  z & = \frac{\left(\frac{x}{1-x}\right)^{M/2}}{1 + \left(\frac{x}{1-x}\right)^{M/2}}
      =  \frac{1}{2} + \frac{\text{tanh}\left(M/2\left(x - 1/2\right)\right)}{2},
  \label{eq:stanh_unipolar}
\end{align}

which corresponds to a scaled and shifted tanh function\footnote{Substituting $P_X$ and $P_Z$ in \eqref{eq:fsm_output_tanh_1} with their bipolar coding format results in 
$z=\text{tanh}(M/2 x)$~\cite{Li_14}, representing the signum function for $M\rightarrow \infty$.}. For a large number of states $M$, \eqref{eq:stanh_unipolar}
behaves like a step function
\begin{align}
  \mathop{\text{lim}}_{M\rightarrow \infty} z = \begin{cases}
                                           0, & 0 \leq x < 0.5 \\
                                           0.5, &  x = 0.5 \\
                                           1, &  0.5 < x \leq 1.
         \end{cases}
\end{align}


\section{State-of-the-Art Stochastic Max/Min Functions}
\label{sec:soa_max_min_fct}

In this section, we discuss two recently proposed architectures of SMax/SMin functions \cite{Li_11,Yu_17}. Moreover, we provide 
analytical proofs of their correctness, since~\mbox{\cite{Li_11,Yu_17}} only provide empirical validations.  For the sake of clearness, 
we focus our analysis on the SMax function, since the presented propositions and proofs can be easily applied to the SMin function.

  \begin{figure}[t!]
  \centering
    \includegraphics[scale=0.7]{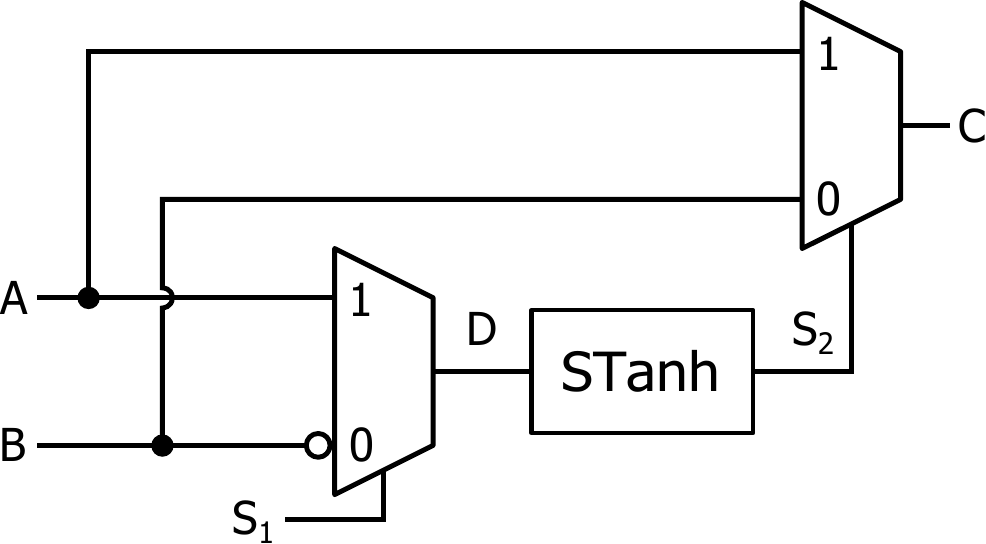}
    \caption{Implementation of the SMax function proposed in \cite{Li_11}}
    \label{fig:Max_1}
  \end{figure}

\subsection{Stochastic Max/Min Function in \texorpdfstring{\cite{Li_11}}{Lg}}
\label{subsec:max_min_fct_soa_1}
Fig. \ref{fig:Max_1} shows the architecture of the SMax function proposed in \cite{Li_11}, which is based on the stochastic comparator (input multiplexer and STanh function). 
The SMin function is obtained by swapping the input streams at the final multiplexer. The following proposition validates the correctness of the circuit shown in Fig. \ref{fig:Max_1}.
\begin{Theorem}
  For uncorrelated input bit streams $A$ and $B$, encoding the values $a = P_A$ and $b = P_B$ (unipolar coding format), the output of the circuit shown in Fig. \ref{fig:Max_1} can be expressed 
  as 
  \begin{align}
    c = a+ \frac{b - a}{1 + \left(\frac{1 + (a - b)}{1 + (b - a)}\right)^{M/2}},
  \label{eq:sc_max_unipolar_wang}
  \end{align}
  
  where $c= P_C$ denotes the value encoded in the output stream~$C$. For $M \rightarrow \infty$ the expression in \eqref{eq:sc_max_unipolar_wang} can be written as
  \begin{align}
    \max(a,b) = \mathop{\mathrm{lim}}_{M\rightarrow \infty} c =
                                             \begin{cases}
                                               a, & a > b \\
                                               a, & a=b\\
                                               b, & a < b,
                                             \end{cases}
    \label{eq:sc_max_limit_wang}
  \end{align}

  which validates the functionality of the SMax function.
\end{Theorem}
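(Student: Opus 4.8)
The plan is to propagate bit-level probabilities through the three stages of the circuit in \fig{fig:Max_1} --- the input multiplexer forming the ``stochastic comparator'', the subsequent STanh finite-state machine, and the output multiplexer --- using only the elementary identities \eqref{eq:mux_out} and \eqref{eq:fsm_output_tanh_1} established in Section~\ref{sec:basics}.

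First I would analyse the comparator front end. Reading the schematic, the input multiplexer is driven by a select stream of probability $1/2$ (uncorrelated with $A$ and $B$) and combines $B$ with the bit-wise complement of $A$, so by \eqref{eq:mux_out} it produces an intermediate stream $X$ with $P_X = \tfrac12 b + \tfrac12(1-a) = (1+b-a)/2$; hence $1-P_X = (1+a-b)/2$ and $P_X/(1-P_X) = \bigl(1+(b-a)\bigr)/\bigl(1+(a-b)\bigr)$. (The mirror-image wiring --- complementing $B$ instead of $A$ and swapping the output-multiplexer inputs --- yields the same end result by symmetry.) Feeding $X$ into the $M$-state linear FSM and applying \eqref{eq:fsm_output_tanh_1} gives
\begin{align}
  P_Z &= \frac{\bigl(\tfrac{P_X}{1-P_X}\bigr)^{M/2}}{1+\bigl(\tfrac{P_X}{1-P_X}\bigr)^{M/2}}
       = \frac{1}{1+\bigl(\tfrac{1-P_X}{P_X}\bigr)^{M/2}} \nonumber \\
      &= \frac{1}{1+\left(\frac{1+(a-b)}{1+(b-a)}\right)^{M/2}}, \nonumber
\end{align}
where the middle equality divides numerator and denominator by $\bigl(P_X/(1-P_X)\bigr)^{M/2}$.

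The output multiplexer then uses $Z$ as its select line to combine $A$ and $B$, so one more application of \eqref{eq:mux_out} gives $c = P_Z\,b + (1-P_Z)\,a = a + P_Z(b-a)$, which after substituting $P_Z$ is exactly \eqref{eq:sc_max_unipolar_wang}. For the limiting form I would set $q := \bigl(1+(a-b)\bigr)/\bigl(1+(b-a)\bigr)$ and observe that, since $a,b\in[0,1]$, both $1+(a-b)$ and $1+(b-a)$ lie in $[0,2]$, so $q\in[0,\infty]$ (with $q=+\infty$ when $a=1,b=0$ and $q=0$ when $a=0,b=1$). Then $q>1$ if $a>b$, whence $q^{M/2}\to\infty$ and $c\to a$; $q<1$ if $a<b$, whence $q^{M/2}\to 0$ and $c\to a+(b-a)=b$; and $q=1$ if $a=b$, whence $c=a+\tfrac12(b-a)=a=b$. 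This is precisely \eqref{eq:sc_max_limit_wang}.

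I do not anticipate a genuine obstacle --- the argument is a verification by composition --- but two points need care: (i) correctly reading off from the schematic that the comparator realises the probability $(1+b-a)/2$ and that a ``one'' at the STanh output routes $B$ (rather than $A$) to the circuit output; and (ii) the independence hypotheses behind \eqref{eq:mux_out} and \eqref{eq:fsm_output_tanh_1} --- in particular the output multiplexer's select stream $Z$ is itself derived from $A$ and $B$, so that step should be justified (or invoked as in the source) rather than merely assumed --- together with the boundary cases of the limit, where the base $q$ may equal $0$, $1$, or $+\infty$.
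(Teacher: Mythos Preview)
Your proposal is correct and follows essentially the same three-stage probability propagation as the paper. The only minor difference is that the paper reads the schematic with the roles swapped --- the first multiplexer forms $P_D=\tfrac12(1+a-b)$ (complementing $B$, not $A$) and a high STanh output routes $A$ (not $B$) --- which is precisely the mirror-image wiring you already flag as equivalent; your extra care with the boundary cases of $q$ and with the independence caveat at the output multiplexer goes slightly beyond what the paper spells out.
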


\begin{proof}
  The output of the first multiplexer is given by~(cf.~\eqref{eq:mux_out_prob})
  \begin{align}
    P_D = P_{S_1} P_A + (1-P_{S_1})(1-P_B) = 1/2(1 + P_A - P_B),
  \end{align}

  with $P_{S_1} = 1/2$. According to \eqref{eq:fsm_output_tanh_1}, the output of the STanh function can be expressed as
  \begin{align}
    P_{S_2} = \frac{\left(\frac{P_D}{1-P_D}\right)^{M/2}}{1 + \left(\frac{P_D}{1-P_D}\right)^{M/2}}.
  \end{align}
  

  Finally, the output of the second multiplexer is given by~(cf.~\eqref{eq:mux_out_prob}) 
  \begin{align}
    P_C & = P_{S_2}P_A + (1-P_{S_2})P_B \\
        & = P_A + \frac{P_B - P_A}{1 + \left(-1 + \frac{2}{1 - (P_A + P_B)}\right)^{M/2}}.
  \end{align}

  When substituting $P_A$, $P_B$ and $P_C$ with their corresponding unipolar coding format values $a$, $b$ and $c$ we obtain
  \begin{align}
    c = a+ \frac{b - a}{1 + \left(\frac{1 + (a - b)}{1 + (b - a)}\right)^{M/2}}.
  \end{align}

  If $M \rightarrow \infty$ the denominator in \eqref{eq:sc_max_unipolar_wang} becomes infinity or zero when $a>b$ or $a<b$, respectively. Thus, $c = a$ or $c = b$ if 
  $a>b$ or $a<b$, which proves the correctness of the SMax circuit shown in Fig. \ref{fig:Max_1}.
\end{proof}

Fig. \ref{fig:smax_wang} illustrates the analytical expression in \eqref{eq:sc_max_unipolar_wang}, the bit-wise simulation results of the circuit shown in Fig.~\ref{fig:Max_1} and the exact max function.
We observe a good match between the theoretical and simulation results. Moreover, we observe that already a moderate number of states $M$ provide a good approximation of the max function.
\todo[inline]{Fig.: Results for different states, Fig.: Verification of the analytical result, Fig./Table: Approx. Error}

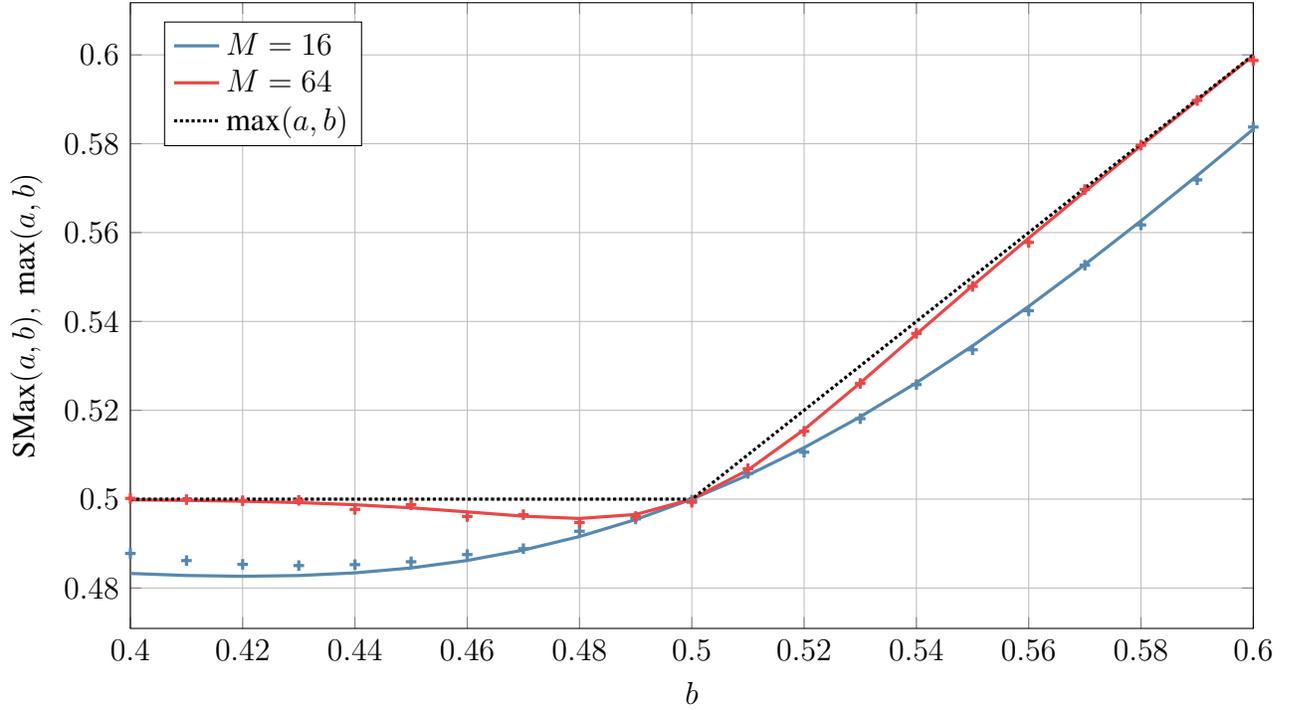
\begin{figure}[t]
\begin{center}
\vspace{-.03cm}
\begin{tikzpicture}
\begin{axis}[compat=newest, 
width=\columnwidth, height =.6\columnwidth,log basis y=10, grid,
ylabel={ SMax$(a,b),$ max$(a,b)$ }, 
xlabel={ $b$ }, 
xmin = 0.4,
xmax = 0.6,
legend pos=north west, 
legend columns = {1},
legend cell align=left,
]
\addplot[color=WernerBlue,very thick, only marks, mark=+, forget plot] table[x index =0, y index =2] {./StochasticMaxVarAsciiLi.dat};
\addplot[color=WernerBlue,very thick] table[x index =0, y index =3] {./StochasticMaxVarAsciiLi.dat};
\addlegendentry{ $M=16$ }
\addplot[color=WernerRed,very thick, only marks, mark=+, forget plot] table[x index =0, y index =4] {./StochasticMaxVarAsciiLi.dat};
\addplot[color=WernerRed,very thick] table[x index =0, y index =5] {./StochasticMaxVarAsciiLi.dat};
\addlegendentry{ $M=64$ }
\addplot[color=black, very thick, densely dotted] table[x index =0, y index =1] {./StochasticMaxVarAsciiLi.dat};
\addlegendentry{ max$(a,b)$ }
\end{axis}
\end{tikzpicture}
  \caption{SMax function \cite{Li_11}. Solid lines: theoretical results \eqref{eq:sc_max_unipolar_wang}; markers~(+): bit-wise simulation for $N = 10^{6}$; dotted line: exact max function.
  \label{fig:smax_wang}}
\end{center}
\end{figure}

\subsection{Stochastic Max/Min Function in \texorpdfstring{\cite{Yu_17}}{Lg}}
\label{subsec:max_min_fct_soa_2}
Fig. \ref{fig:Max_2} shows the architecture of the SMax function proposed in \cite{Yu_17}. Similar to Sec. \ref{subsec:max_min_fct_soa_1}, the SMin function is obtained by swapping the input streams at the final multiplexer. The following proposition validates the correctness of the circuit shown in Fig. \ref{fig:Max_2}.
\begin{Theorem}
  For uncorrelated input bit streams $A$ and $B$, encoding the values $a = P_A$ and $b = P_B$ (unipolar coding format), the output of the circuit shown in Fig. \ref{fig:Max_2} can be expressed 
  as
  \begin{align}
    c = a + \frac{b - a}{1 + \left(\frac{a(1-b)}{b(1-a)}\right)^{M/2}}.
    \label{eq:sc_max_unipolar_yu}
  \end{align}

  where $c= P_C$ denotes the value encoded in the output stream~$C$. For $M \rightarrow \infty$ the expression in \eqref{eq:sc_max_unipolar_yu} can be written as
  \begin{align}
    \max(a,b) = \mathop{\mathrm{lim}}_{M\rightarrow \infty} c = 
                                             \begin{cases}
                                               a, & a > b \\
                                               a, & a=b\\
                                               b, & a < b.
                                             \end{cases}
    \label{eq:sc_max_limit_yu}
\end{align}

  which validates the functionality of the SMax function.
\end{Theorem}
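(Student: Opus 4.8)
The plan is to mirror the structure of the proof of the previous proposition, the key difference being how the driving signal of the linear FSM is generated. In the architecture of Fig.~\ref{fig:Max_2}, the state counter is not fed through a single STanh block preceded by a multiplexer; instead the counter is incremented whenever $A[i]=1$ and $B[i]=0$, decremented whenever $A[i]=0$ and $B[i]=1$, and held otherwise. First I would make this transition rule explicit and, using the uncorrelatedness of $A$ and $B$, identify the per-step probabilities of incrementing, decrementing, and holding as $P_A(1-P_B)$, $(1-P_A)P_B$, and $P_AP_B+(1-P_A)(1-P_B)$, respectively.

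Next I would model the state evolution as a time-homogeneous, irreducible, aperiodic birth--death Markov chain on $S_0,\dots,S_{M-1}$ --- now a ``lazy'' chain because of the nonzero holding probability. The crucial observation is that the self-loops (including the boundary self-loops at $S_0$ and $S_{M-1}$) cancel out of the stationary equations, so the detailed-balance ratios are unchanged and the unique steady-state distribution is obtained from \eqref{eq:steady_state_prob} upon replacing the ratio $\tfrac{P_X}{1-P_X}$ with $\tfrac{P_A(1-P_B)}{(1-P_A)P_B}$. Summing the upper half of the states exactly as in the passage from \eqref{eq:fsm_output_tanh} to \eqref{eq:fsm_output_tanh_1} then yields the FSM-block output
\begin{align}
P_{S_2} = \frac{\left(\frac{P_A(1-P_B)}{(1-P_A)P_B}\right)^{M/2}}{1 + \left(\frac{P_A(1-P_B)}{(1-P_A)P_B}\right)^{M/2}}.
\end{align}
Using $P_{S_2}$ as the select signal of the output multiplexer and invoking \eqref{eq:mux_out_prob} gives $P_C = P_{S_2}P_A + (1-P_{S_2})P_B$; rearranging this as $P_C = P_A + (P_B-P_A)/\bigl(1+(\tfrac{P_A(1-P_B)}{(1-P_A)P_B})^{M/2}\bigr)$ and substituting the unipolar values $a=P_A$, $b=P_B$, $c=P_C$ produces \eqref{eq:sc_max_unipolar_yu}. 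For the limit I would invoke the elementary equivalence $a\gtrless b \iff \tfrac{a(1-b)}{b(1-a)}\gtrless 1$, valid for $a,b\in(0,1)$: when $a>b$ the base exceeds one, the added fraction vanishes as $M\to\infty$, and $c\to a$; when $a<b$ the base is below one, the fraction tends to $b-a$, and $c\to b$; when $a=b$ the numerator is zero and $c=a$. This establishes \eqref{eq:sc_max_limit_yu} (and the edge cases $a\in\{0,1\}$ or $b\in\{0,1\}$ can be checked directly).

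I expect the main obstacle to be the second step: rigorously justifying that the ``lazy'' FSM, whose increment and decrement events are driven jointly by the two input streams rather than by a single Bernoulli source, still admits the closed-form steady-state distribution of \eqref{eq:steady_state_prob} with the modified ratio --- in particular, handling the boundary states and verifying that the holding probability cancels from the balance equations, so that the existing STanh-type summation carries over verbatim. Everything downstream is a routine repetition of the computation already performed for Fig.~\ref{fig:Max_1}.
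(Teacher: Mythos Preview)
Your proposal is correct and follows essentially the same route as the paper: identify the increment/decrement probabilities of the two-input FSM as $P_A(1-P_B)$ and $(1-P_A)P_B$, plug the resulting ratio into the steady-state formula \eqref{eq:steady_state_prob}, sum the upper half of the states to obtain the STanh output, feed it as the select of the output multiplexer, and take the limit. The only cosmetic difference is that the paper routes the description through the XOR output $D$ (treating $D$ as an enable signal and $A$ as the direction), whereas you work directly with the joint $(A,B)$ events and are slightly more explicit about the holding probability cancelling from the detailed-balance equations---but the argument is the same.
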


\begin{proof}
  The output of the XOR gate can be expressed as~(cf.~\eqref{eq:xor_out_prob})
  \begin{align}
    P_D = P_A + P_B - 2 P_A P_B.
    \label{eq:sc_max_xor_yu}
  \end{align}
  \begin{figure}[t!]
  \centering
    \includegraphics[scale=0.7]{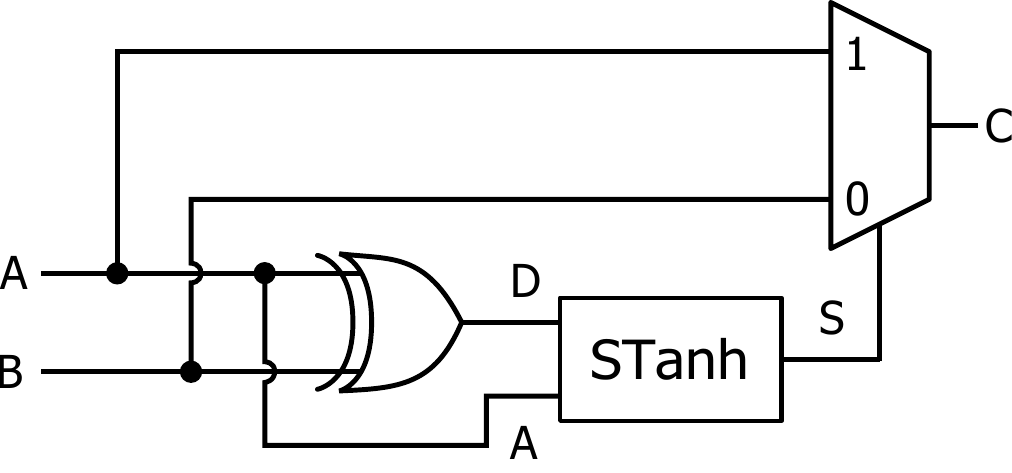}
    \caption{Implementation of the SMax function proposed in \cite{Yu_17}.}
    \label{fig:Max_2}
  \end{figure}

  \todo[inline]{Fig. of the new FSM}
  In contrast to the STanh function presented in Sec. \ref{subsubsec:stanh} the STanh function shown in Fig. \ref{fig:Max_2} has two inputs $A$ and $D$. The stochastic stream $D$ is used to enable the FSM state update and the $A$ updates the state according to its value. In particular, if $D[i]=1$ then the state increases if $A[i] = 1$ and decreases if $A[i] = 0$; if $D=0$ the state is not updated independently of $A[i]$. Thus, the probability that the state increases or decreases is given by $P_A P_D$ and $(1-P_A) P_D$, respectively. According to \eqref{eq:steady_state_prob}, the steady state probability is given by
  \begin{align}
    P_i & = \frac{\left(\frac{P_A P_D}{(1-P_A) P_D}\right)^i}{\sum\limits_{j=0}^{M-1} \left(\frac{P_A P_D}{(1-P_A) P_D}\right)^j} \nonumber \\
        & = \frac{\left(\frac{P_A(1-P_B)}{P_B(1-P_A)}\right)^i}{\sum\limits_{j=0}^{M-1} \left(\frac{P_A(1-P_B)}{P_B(1-P_A)}\right)^j},
    \label{eq:steady_state_prob_two_inp}
  \end{align}

  with $P_A P_D = P_A(1-P_B)$ and~\mbox{$P_D (1-P_A)  = P_B(1-P_A)$}. According to \eqref{eq:fsm_output_tanh_1}, the output of the STanh function can be expressed as
  \begin{align}
    P_S = \frac{\left(\frac{P_A(1-P_B)}{P_B(1-P_A)}\right)^{M/2}}{1 + \left(\frac{P_A(1-P_B)}{P_B(1-P_A)}\right)^{M/2}}.
  \end{align}

  Finally, the output at the multiplexer is given by~(cf.~\eqref{eq:mux_out_prob})
  \begin{align}
    P_C & = P_{S}P_A + (1-P_{S})P_B \\
        & = P_A + \frac{P_B - P_A}{1 + \left(\frac{P_A(1-P_B)}{P_B(1-P_A)}\right)^{M/2}}.
  \end{align}

  For the unipolar encoding format (i.e. $a = P_A$, $b = P_B$ and $c = P_C$) we have
  \begin{align}
    c = a + \frac{b - a}{1 + \left(\frac{a(1-b)}{b(1-a)}\right)^{M/2}}.
  \end{align}

  If $M \rightarrow \infty$ the denominator in \eqref{eq:sc_max_unipolar_yu} becomes infinity or zero when $a>b$ or $a<b$, respectively. Thus, $c = a$ or $c = b$ if 
  $a>b$ or $a<b$, which proves the correctness of the SMax circuit shown in Fig. \ref{fig:Max_2}.
  \end{proof}

Fig. \ref{fig:smax_yu} illustrates the analytical expression in \eqref{eq:sc_max_unipolar_yu}, the bit-wise simulation results of the circuit shown in Fig.~\ref{fig:Max_2}, and the exact max function.
We observe a good match between the theoretical and simulation results. Similar to Fig.~\ref{fig:smax_wang}, we observe that already a moderate number of states $M$ provide a good approximation of the max function. 
However, in Fig. \ref{fig:smax_yu} one can already see a closer match of this approach compared to the SMax function of \cite{Li_11}.

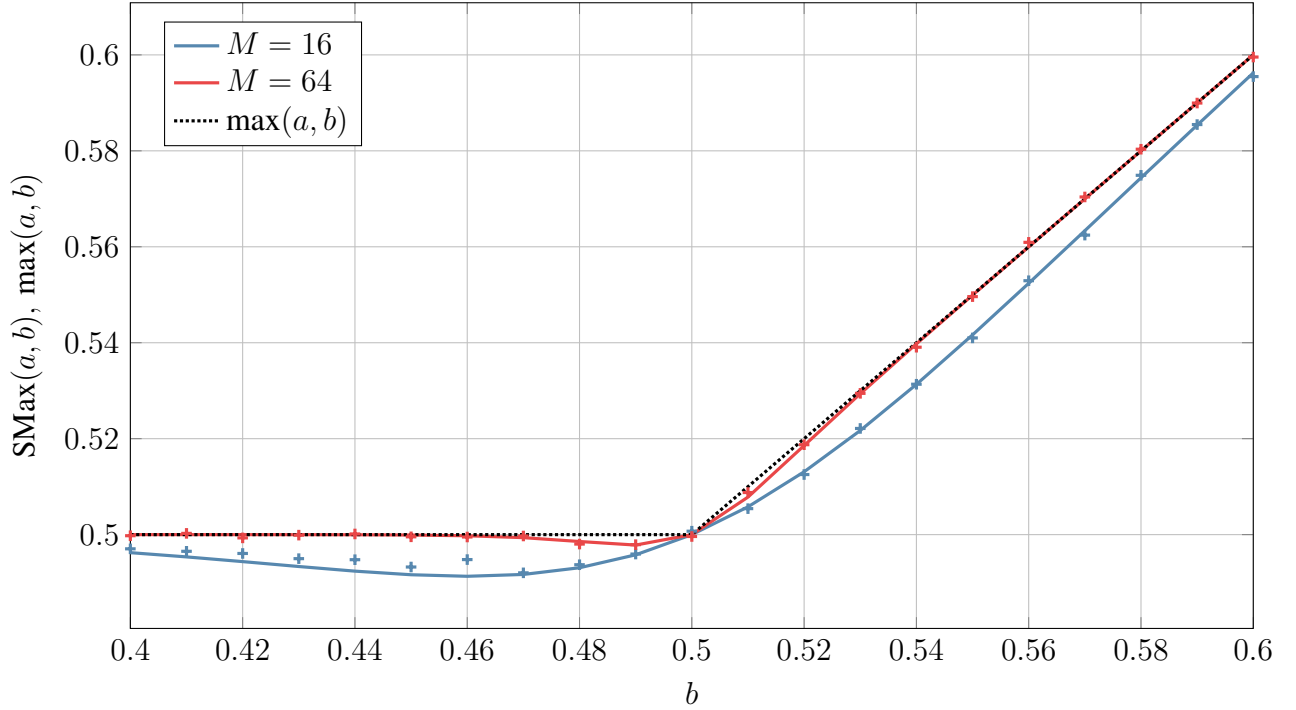
\begin{figure}[t]
\begin{center}
\vspace{-.03cm}
\begin{tikzpicture}
\begin{axis}[compat=newest, 
width=\columnwidth, height =.6\columnwidth,log basis y=10, grid,
ylabel={ SMax$(a,b),$ max$(a,b)$ }, 
xlabel={ $b$ }, 
xmin = 0.4,
xmax = 0.6,
legend pos=north west, 
legend columns = {1},
legend cell align=left,
]

\addplot[color=WernerBlue,very thick,  only marks, mark=+, forget plot] table[x index =0, y index =2] {./StochasticMaxVarAsciiYu.dat};
\addplot[color=WernerBlue,very thick] table[x index =0, y index =3] {./StochasticMaxVarAsciiYu.dat};
\addlegendentry{ $M=16$ }
\addplot[color=WernerRed,very thick,  only marks, mark=+, forget plot] table[x index =0, y index =4] {./StochasticMaxVarAsciiYu.dat};
\addplot[color=WernerRed,very thick] table[x index =0, y index =5] {./StochasticMaxVarAsciiYu.dat};
\addlegendentry{ $M=64$ }
\addplot[color=black,very thick, densely dotted] table[x index =0, y index =1] {./StochasticMaxVarAsciiYu.dat};
\addlegendentry{ max$(a,b)$ }
\end{axis}
\end{tikzpicture}
\caption{SMax function \cite{Yu_17}. Solid lines: Theoretical results \eqref{eq:sc_max_unipolar_yu}; markers~(+): bit-wise simulation for $N = 10^{6}$; dotted line: exact max function.  \label{fig:smax_yu}}
\end{center}
\end{figure}

\section{Novel Stochastic Max/Min Function: Architecture}
\label{sec:new_max_min_fct_architec}

In this section, we propose a novel architecture for the SMax function as shown in Fig.~\ref{fig:Max_new}. This circuit can be easily converted to realize the SMin function by inverting its inputs $A$ and $B$ as well as its output $C$. Hence, we only consider the SMax function in the following description.

In contrast to the state-of-the-art SMax functions \cite{Yu_17, Li_11}, the FSM-based SC element used in the proposed architecture does not implement the STanh function. However, similar to the architecture 
in \cite{Yu_17} it has two inputs $A$ and~$D$. Input~$D$ enables the FSM state update and $A$ updates the state according to its value (cf. Sec. \ref{subsec:max_min_fct_soa_2}). FSM-based elements 
can either be  implemented using up/down counters or shift registers. When using a shift register, its length $L$ is 
equal to the last state of the FSM, i.e. $L = M-1$.
For the novel SMax function we use a shift register, since it has some distinct advantages compared to a counter-based implementation~\cite{Ting_17}. One advantage is that the values in a shift register are 
of equal significance, in contrast to a binary counter, where the bits are weighted by different powers of two. This allows to design more fault-tolerant SC computing circuits when using shift registers. Furthermore, as the following description demonstrates, shift registers are naturally suited to implement the described functionality. 

Depending on the actual value of the input streams $A$~and~$B$ the functionality of the SMax function (cf.  Fig.~\ref{fig:Max_new}) can be described as follows:
\begin{itemize}
  \item $A[i] \rmv  = \rmv B[i]$: Since $D[i] = 0$ the content of the shift register remains unchanged (state is not updated) and the output of the circuit is given by $C[i] = B[i]$.
  \item $A[i] \rmv = \rmv 0, B[i] \rmv = \rmv 1$:  Since $D[i] = 1$ and $A[i]=0$ a zero is shifted from the right into the shift register (state is decremented) and the the output of the circuit is given by $C[i] = B[i]$.
  \item $A[i] = 1, B[i] = 0$: Since $D[i] = 1$ and $A[i]=1$ a one  is shifted from the left into the shift register (state is incremented). In this case the rightmost value of the shift register is output by the circuit, i.e. $C[i] = U[i]$.  
\end{itemize}
According to the description above it is important to note that the ones in the stream $B$ also appear in the output stream $C$.

In the following, we provide two approaches for analyzing the functionality of the proposed SMax function. First, we describe the functionality by considering the individual bits in the stochastic bit stream. 
Then, similar to Sec. \ref{sec:soa_max_min_fct} we proof the correctness of the circuit assuming very long stochastic bit streams. We denote these two methods as deterministic and probabilistic analysis, respectively.

\begin{figure}[t!]
\centering
  \includegraphics[scale=0.7]{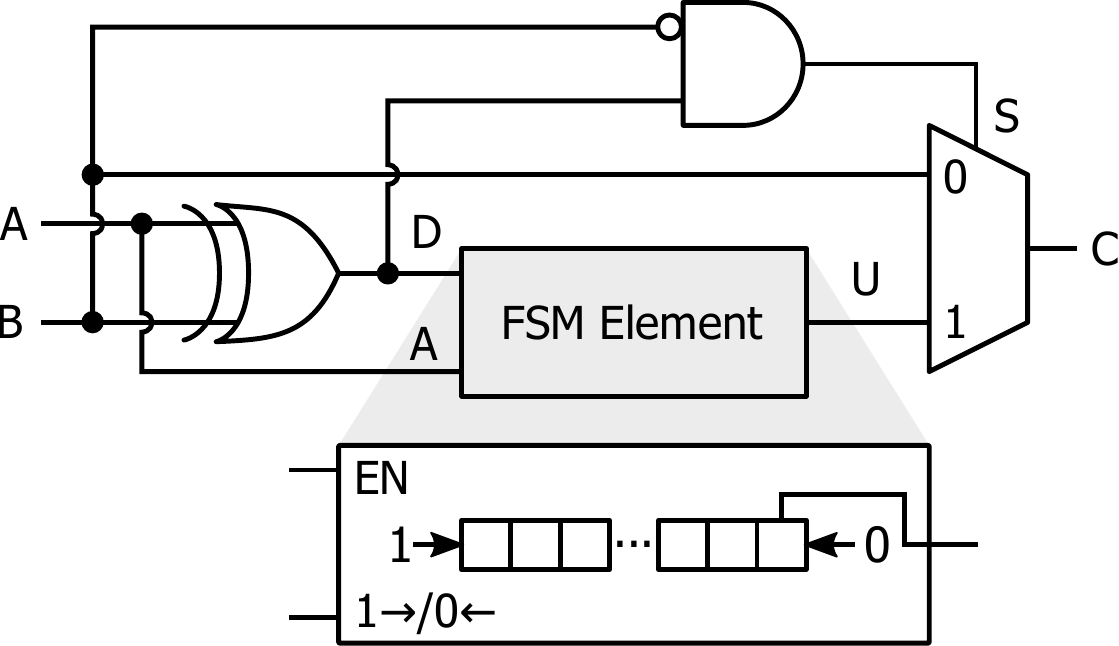}
  \caption{Implementation of the novel SMax function.}
  \label{fig:Max_new}
\end{figure}



\subsection{Deterministic Analysis}
\label{subsec:novel_smax_det}

For the deterministic analysis of the novel SMax function we distinguish the two cases: $r(A) \leq r(B)$ and $r(A)>r(B)$.

\subsubsection{SMax Circuit Behavior for \texorpdfstring{$r(A) \leq r(B)$}{Lg}}

If $r(A) \leq r(B)$, the stream $B$ has more (or equal) ones than bit stream $A$. For a correct functionality it is desired that the number of ones $o(B)$ in the input stream $B$,  and the number of ones $o(C)$ in the output stream $C$, are equal. As discussed above, all ones of stream $B$ are included in the output stream $C$. However, if a subsequence of $A$ has more ones than the corresponding subsequence of $B$, also ones of stream $A$ might be additionally injected into the output stream $C$. This occurs if the excess of ones in this subsequence is larger than the shift register length $L$. We refer to such an event as right overflow of the shift register. Thus, the number of ones in the output stream~$C$ can be expressed as
\begin{align}
  o(C) = o(B) + o_\text{R},
\end{align}

where $o_R$ denotes the additional number of ones due to the right overflows. If the shift register is sufficiently long, no right overflows occur, i.e. $o(C) = o(B)$.

\subsubsection{SMax Circuit Behavior for \texorpdfstring{$r(A)>r(B)$}{Lg}}
\label{}

If $r(A)>r(B)$ , the bit stream $A$ has more ones than bit stream $B$. For a correct functionality, it is desired that $o(A)$, the number of ones in the input stream $A$, and $o(C)$, the number ones in the output stream $C$, are identical. Similar as above, all ones of $B$ are included in the output stream $C$. In addition, 
the excess of ones in stream $A$ is shifted into the shift register and once the shift register is filled, the ones are injected into the output stream $C$ when $B[i] = 0$ and $A[i] =1$ occurs. 
However, at the end there might be ones left in the shift register, which are missing in the output stream\footnote{We assume the same length for all stochastic streams, which is a typically assumption in SC.} $C$. We denote the number of missing ones by $o_\text{S}$. 

Moreover, if a subsequence of $B$ has more ones than the corresponding  subsequence of $A$, also ones of stream $B$ might additionally be injected into the output stream $C$. This happens if the excess of ones leads to an empty (all-zero) shift register, and, thus, 
an input pattern $B[i]=1$ and $A[i]= 0$ (and assuming a later following $B[i]=0$ and $A[i]= 1$) injects an additional one in the output stream $C$. We refer to this effect as left overflow of the shift register and denote the number of additional ones due to left overflows by $o_\text{L}$. This allows expressing the number of ones in the output stream~$C$ as
\begin{align}
  o(C) = o(B) + (o(A) - o(B)) + o_\text{L} - o_\text{S},
  \label{eqn:ovandremain}
\end{align}
where $(o(A) - o(B))$ denotes the excess of ones in stream~$A$ compared to stream~$B$. Expression (\ref{eqn:ovandremain}) shows the two opposite 
error effects. One the one hand, the number of left overflows $o_\text{L}$ becomes smaller for long shift registers. On the other hand, the error due to the remaining ones in the shift register $o_\text{S}$ becomes smaller, for short shift registers. In Sec.~\ref{sec:new_max_min_fct_analysis} we determine the optimal shift register length for a given bit stream length.

\subsection{Probabilistic Analysis}
\label{subsec:novel_smax_prob}
The following proposition validates the correctness of the circuit shown in Fig. \ref{fig:Max_new}.
\begin{Theorem}
  For uncorrelated input bit streams $A$ and $B$, encoding the values $a = P_A$ and $b = P_B$ (unipolar coding format), the output of the circuit shown in Fig. \ref{fig:Max_new} can be expressed 
  as
  \begin{align}
    c = b + \frac{b - a}{\left(\frac{b\left(1-a\right)}{a\left(1-b\right)}\right)^M - 1}.
    \label{eq:sc_max_unipolar_new}
  \end{align}

  where $c= P_C$ denotes value encoded in the output stream~$C$. For $M \rightarrow \infty$ the expression in \eqref{eq:sc_max_unipolar_new} can be written as
  \begin{align}
    \max(a,b) = \mathop{\text{lim}}_{M\rightarrow \infty} c = 
                                             \begin{cases}
                                               c = a, & a > b \\
                                               c = b, & a=b\\
                                               c = b, & a < b.
                                             \end{cases}
    \label{eq:sc_max_limit_new}
  \end{align}

  which validates the functionality of the SMax function.
\end{Theorem}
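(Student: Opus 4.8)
The plan is to describe the shift register of Fig.~\ref{fig:Max_new} as a scalar Markov chain on its content state and to read off the long‑run output rate in steady state, in the same spirit as the derivation \eqref{eq:steady_state_prob}--\eqref{eq:stanh_unipolar}.

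First I would translate the update rule into probabilities. Because $D=A\oplus B$, the register is touched only when $A[i]\neq B[i]$: it is incremented (a one enters from the left) on the event $\{A[i]=1,\,B[i]=0\}$, of probability $a(1-b)$, and decremented (a zero enters from the right) on $\{A[i]=0,\,B[i]=1\}$, of probability $(1-a)b$; otherwise it is unchanged. Since the bit pairs are independent across $i$, the content then evolves as a time‑homogeneous Markov chain. The crux is a small structural lemma: starting from the all‑zero content, the only reachable register states are the ``thermometer'' patterns $1^{k}0^{L-k}$ with $L=M-1$, so the content is fully described by the integer $k\in\{0,\dots,M-1\}$, the rightmost bit $U[i]$ equals one iff $k=M-1$ (register full), and a one arriving at a full register (a ``right overflow'') is shifted straight out into the rightmost position. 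Granting this, the state sequence is an irreducible, aperiodic birth--death chain (the self‑loops give aperiodicity) and hence has a unique stationary distribution.

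Next I would determine that distribution and the output. Detailed balance yields $P_{i+1}/P_i = a(1-b)\big/\big((1-a)b\big)$ for every $i$ (the self‑loop probabilities cancel), so, exactly as in \eqref{eq:steady_state_prob} but with the ratio $P_X/(1-P_X)$ replaced by $\rho = a(1-b)\big/\big((1-a)b\big)$,
\begin{align*}
  P_i = \frac{\rho^{\,i}}{\sum_{j=0}^{M-1}\rho^{\,j}}\,,\qquad i=0,\dots,M-1 .
\end{align*}
For the output, $C[i]=1$ whenever $B[i]=1$, and in addition $C[i]=U[i]$ precisely when $A[i]=1,\,B[i]=0$; since the step‑$i$ input is independent of the current register state, the steady‑state output rate is
\begin{align*}
  c = P_C = b + a(1-b)\,P_{M-1}\,.
\end{align*}
Substituting $P_{M-1}=\rho^{M-1}(\rho-1)/(\rho^{M}-1)$ (a finite geometric sum) and using $\rho-1 = (a-b)\big/\big((1-a)b\big)$ makes the $a,b$ factors collapse and gives $a(1-b)P_{M-1} = (b-a)\big/\big((b(1-a)/(a(1-b)))^{M}-1\big)$, i.e.\ \eqref{eq:sc_max_unipolar_new}.

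Finally, for the limit \eqref{eq:sc_max_limit_new} I would use that $\rho<1\Leftrightarrow a<b$ and $\rho>1\Leftrightarrow a>b$: if $a<b$ the denominator $(b(1-a)/(a(1-b)))^{M}-1\to\infty$, so the correction term vanishes and $c\to b$; if $a>b$ it tends to $0$ from below, so the correction tends to $a-b$ and $c\to a$; the degenerate case $a=b$ (where the closed form reads $0/0$) is treated directly — the chain is then a symmetric walk with uniform stationary law, $P_{M-1}=1/M$, so $c=b+a(1-b)/M\to b=a$. I expect the structural lemma to be the only real obstacle: one must argue carefully that only thermometer contents are reachable and that ``rightmost bit one'' is synonymous with ``register full'', because that is exactly what licenses both the reduction to a scalar Markov state and the clean identity $P_C=b+a(1-b)P_{M-1}$; the stationary‑distribution evaluation and the limit are then routine. (The boundary values $a\in\{0,1\}$ or $b\in\{0,1\}$ push $\rho$ to $0$ or $\infty$ and are covered either by reading \eqref{eq:sc_max_unipolar_new} in the limit or by direct inspection.)
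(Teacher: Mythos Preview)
Your proposal is correct and follows essentially the same route as the paper: model the shift register as a birth--death chain with ratio $\rho=a(1-b)/\big((1-a)b\big)$, read off $P_{M-1}$ from the geometric stationary law, and combine it with the multiplexer/bit-level logic to get $c=b+a(1-b)P_{M-1}$, which simplifies to \eqref{eq:sc_max_unipolar_new}. The only differences are cosmetic: the paper traces the block diagram (XOR $\to$ FSM $\to$ AND $\to$ MUX) using the component formulas \eqref{eq:xor_out_prob}, \eqref{eq:steady_state_prob_two_inp}, \eqref{eq:mux_out_prob} rather than your direct bit-level derivation, and it does not isolate the ``thermometer'' structural lemma or the $a=b$ case explicitly --- your treatment of those points is in fact more careful than the paper's.
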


\begin{proof}
  The output of the XOR gate can be expressed as~(cf.~\eqref{eq:xor_out_prob})
  \begin{align}
    P_D = P_A + P_B - 2 P_A P_B.
    \label{eq:sc_max_xor_new}
  \end{align}
  
  Similar to Sec. \ref{subsec:max_min_fct_soa_2}, the FSM has two inputs $A$ and $D$ and, thus, the steady state probability can be written as (cf. \eqref{eq:steady_state_prob_two_inp})
  \begin{align}
    P_i & = \frac{\left(\frac{P_A(1-P_B)}{P_B(1-P_A)}\right)^i}{\sum\limits_{j=0}^{M-1} \left(\frac{P_A(1-P_B)}{P_B(1-P_A)}\right)^j},
    \label{eq:sc_max_fsm_state_general_new}
  \end{align}

  According to Fig.~\ref{fig:Max_new}, the FSM outputs can be expressed as
  \begin{align}
    P_U & = P_{M-1} = \frac{\left(\frac{P_A(1-P_B)}{P_B(1-P_A)}\right)^{M-1}}{\sum\limits_{j=0}^{M-1} \left(\frac{P_A(1-P_B)}{P_B(1-P_A)}\right)^j} \nonumber \\
        & = \frac{P_B - P_A}{P_A(1-P_B)\left(-1+{\left(\frac{P_A\left(1-P_B\right)}{P_B\left(1-P_A\right)}\right)^{-M}}\right)}.
    \label{eq:sc_max_fsm_state_last_new}
  \end{align}
  

  The output of the AND gate can be calculated as~(cf.~\eqref{eq:mux_out_prob})
  \begin{align}
    P_S = P_D(1-P_B) = P_A(1-P_B).
  \end{align}

  Finally, the output of the multiplexer is given by
  \begin{align}
    P_C & = P_S P_U + (1-P_S)P_B \nonumber \\
        & =P_B + \frac{P_B - P_A}{\left(\frac{P_B\left(1-P_A\right)}{P_A\left(1-P_B\right)}\right)^M - 1}
  \end{align}

  For the unipolar encoding format (i.e. $a = P_A$, $b = P_B$ and $c = P_C$) we have
  \begin{align}
    c = b + \frac{b - a}{\left(\frac{b\left(1-a\right)}{a\left(1-b\right)}\right)^{M} - 1}.
  \end{align}
  
If $M \rightarrow \infty$ the denominator in \eqref{eq:sc_max_unipolar_yu} becomes zero or infinity depending on whether $a>b$ or $a \leq b$, respectively. Thus, $c = a$ or $c = b$ if 
$a>b$ or $a \leq b$, which proves the correctness of the SMax circuit shown in Fig. \ref{fig:Max_new}.
\end{proof}

Fig. \ref{fig:smax_new} illustrates the analytical expression in \eqref{eq:sc_max_unipolar_new}, the bit-wise simulation results of the circuit shown in Fig.~\ref{fig:Max_new} and the exact max function.
We observe a good match between the theoretical and simulation results. Moreover, we observe that already a low number of states $M$ provide a good approximation of the max function. 
In contrast to the state-of-the-art SMax functions \cite{Li_11,Yu_17} the proposed function does not approach the exact max function at $a=b$, but provides a better approximation for $a\neq b$.

Next, we compare the approximation error of the state-of-the-art SMax functions and the novel SMax function. 
For this we calculate the expected value of the absolute error, assuming a uniform distribution of $a$ and $b$ over the interval~$[0,1]$, respectively. 
We define the absolute error $e$ by $e=|c_\text{exact} - c|$, with the exact max function $c_\text{exact} = \max(a,b)$ and the FSM-based approximations $c$ given in \eqref{eq:sc_max_unipolar_wang}, \eqref{eq:sc_max_unipolar_yu} and~\eqref{eq:sc_max_unipolar_new}, respectively. Then, the expected absolute error can be calculated as 
\begin{align}
  E(e) & = \int\limits_0^1 \int\limits_0^1 |c_\text{exact} - c| \text{d}a \text{d}b.
  \label{eqn:ExpMInf}
\end{align}
The absolute value of the error in (\ref{eqn:ExpMInf}) allows to consider both,
erroneously added ones (i.e. $c > c_\text{exact}$) as well as erroneously removed ones (i.e. $c < c_\text{exact}$) in the bit stream representing $c$. When considering a stochastic bit stream of~$c_\text{exact}$, then the absolute difference $e$ can be interpreted as a bit error probability of $c$ compared to such a bit stream of $c_\text{exact}$. 
This is crucial in order to enable a comparison with the analysis results 
presented in Sec.~\ref{sec:new_max_min_fct_analysis}.
We observe 
from Fig. \ref{fig:appr_error_smax} that the novel SMax function has a significantly lower approximation error than the state-of-the-art SMax functions. This is because the denominator of \eqref{eq:sc_max_unipolar_new}~\mbox{(power of $M$)} converges faster to zero or infinity as the number of states~$M$ increases compared to the denominators in \eqref{eq:sc_max_unipolar_wang} or \eqref{eq:sc_max_unipolar_yu} (power of $M/2$).
Moreover, we observe that the SMax function proposed in \cite{Li_11} has the highest approximation error among the 
three approaches.


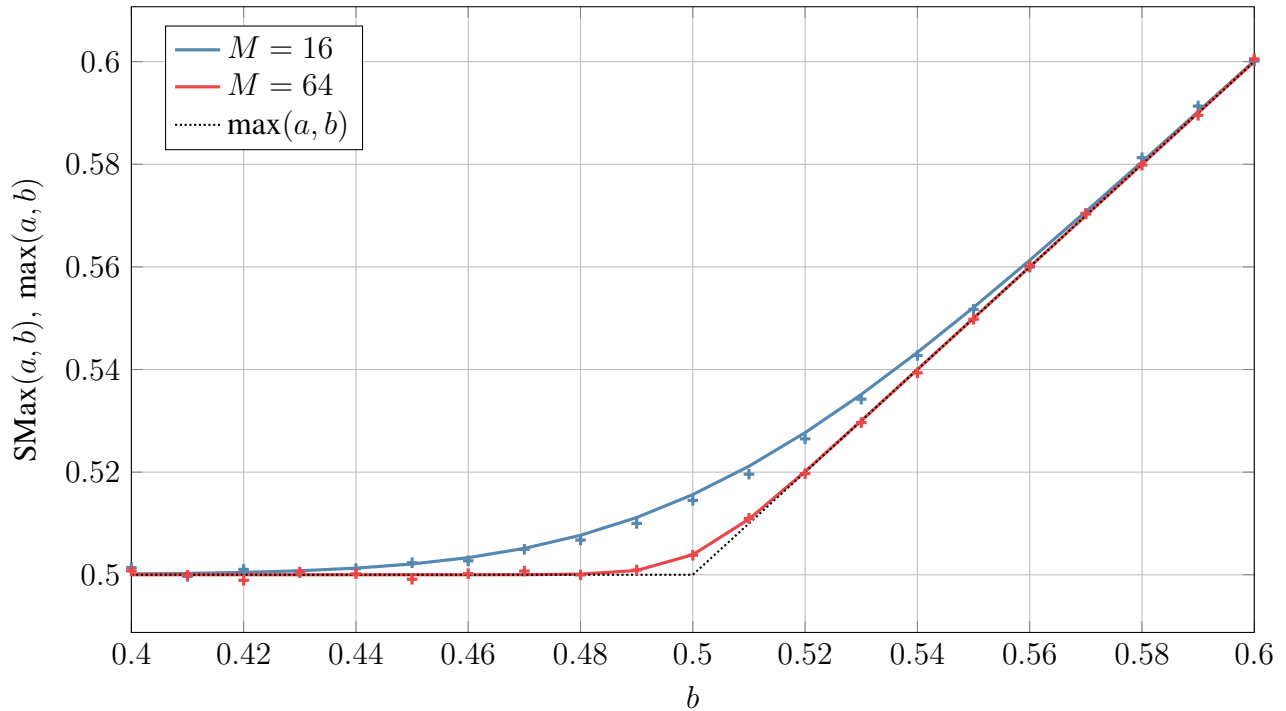
\begin{figure}[t]
\begin{center}
\vspace{-.03cm}
\begin{tikzpicture}
\begin{axis}[compat=newest, 
width=\columnwidth, height =.6\columnwidth,log basis y=10, grid,
ylabel={ SMax$(a,b),$ max$(a,b)$ }, 
xlabel={ $b$ }, 
xmin = 0.4,
xmax = 0.6,
legend pos=north west, 
legend columns = {1},
legend cell align=left,
]

\addplot[color=WernerBlue,very thick,  only marks, mark=+, forget plot]table[x index =0, y index =2] {./StochasticMaxVarAsciiNew.dat};
\addplot[color=WernerBlue,very thick]table[x index =0, y index =3] {./StochasticMaxVarAsciiNew.dat};
\addlegendentry{ $M=16$ }
\addplot[color=WernerRed,very thick,  only marks, mark=+, forget plot]table[x index =0, y index =4] {./StochasticMaxVarAsciiNew.dat};
\addplot[color=WernerRed,very thick]table[x index =0, y index =5] {./StochasticMaxVarAsciiNew.dat};
\addlegendentry{ $M=64$ }
\addplot[color=black,thick, densely dotted] table[x index =0, y index =1] {./StochasticMaxVarAsciiNew.dat};
\addlegendentry{ max$(a,b)$ }
\end{axis}
\end{tikzpicture}
  \caption{Novel SMax function. Solid lines: theoretical results \eqref{eq:sc_max_unipolar_new}; markers~(+): bit-wise simulation of Fig. \ref{fig:Max_2} for $N = 10^{6}$; dotted line: exact max function.
  \label{fig:smax_new}}
\end{center}
\end{figure}

\begin{figure}[t]
\begin{center}
\vspace{-.03cm}
\begin{tikzpicture}
\begin{semilogyaxis}[compat=newest, 
width=\columnwidth, height =.6\columnwidth,log basis y=10, grid,
ylabel={ $E(e)$ }, 
xlabel={ States $M$ }, 
xmin = 0,
xmax = 50,
legend pos=north east, 
legend columns = {1},
legend cell align=left,
]
\addplot[color=WernerBlue,very thick] table[x index =0, y index =1] {./ApproxErrorAnalysisVarAscii_51.dat};
\addlegendentry{ SMax of \cite{Li_11} }
\addplot[color=WernerGreen,very thick] table[x index =0, y index =2] {./ApproxErrorAnalysisVarAscii_51.dat};
\addlegendentry{ SMax of \cite{Yu_17} }
\addplot[color=WernerRed,very thick] table[x index =0, y index =3] {./ApproxErrorAnalysisVarAscii_51.dat};
\addlegendentry{ Novel SMax }
\end{semilogyaxis}
\end{tikzpicture}
\caption{Expected absolute error versus different number of states.
  \label{fig:appr_error_smax}}
\end{center}
\end{figure}
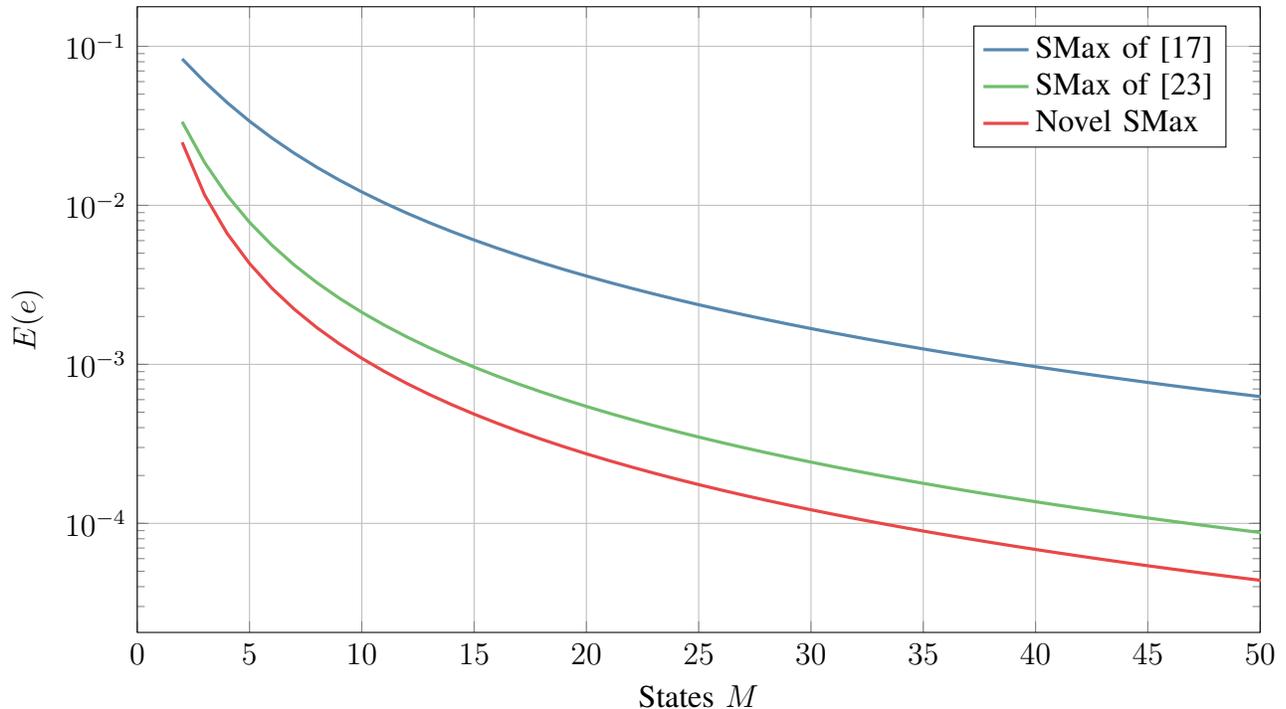

\section{Novel Stochastic Max/Min Function: Error Analysis}
\label{sec:new_max_min_fct_analysis}

We observed from the deterministic analysis in Sec.~\ref{subsec:novel_smax_det} that  
long shift registers reduce the errors due to right and left overflows. However, a large shift register 
length increases the error caused by the remaining bits in the shift register. It is important to note that especially the 
last observation cannot be inferred from the probabilistic analysis presented in the previous chapter. The probabilistic analysis becomes exact if and only if the bit stream length goes to infinity. In such a case, a finite number of remaining bits in the shift register does not matter. However, when using a finite bit stream length, these remaining bits matter. In the following analysis, we assume a finite
bit stream length that is significantly larger than the shift register length (a typical scenario in SC). This allows using the probabilistic FSM description in Sec.~\ref{subsec:novel_smax_prob} for modelling the behavior of the shift register for finite bit stream lengths. 
Moreover, having sufficiently long bit streams justifies using the probabilities of ones instead of the rate of ones in the stream  (cf. Sec.~\ref{subsec:unipolar_coding}) for the following error analysis.

In the following, we derive the expected error probability, based on the deterministic analysis in Sec.~\ref{subsec:novel_smax_det}. 
With this expression we determine the optimal shift register length~$L_\text{opt}$, with respect to the bit stream length $N$.
Similar to Sec.~\ref{subsec:novel_smax_det} we distinguish two cases: $a \leq b$ and $a > b$. 

\subsection{Error Probability for  \texorpdfstring{$a \leq b$}{Lg}}
In this case, an error occurs due to right overflows. In particular, the shift register is filled with ones, i.e. the FSM is in the last state $M-1$, and the input 
$A[i] = 1$ and $B[i] = 0$ is applied. The probability for this error event can be described as
\begin{align}
  P_{e, a \leq b} = P_{M-1} P_A (1-P_B),
\label{eqn:Peab}
\end{align}

where $P_{M-1}$ describes the probability of the FSM to be in the last state (cf. \eqref{eq:sc_max_fsm_state_last_new}).

\subsection{Error Probability for  \texorpdfstring{$a > b$}{Lg}}
In this case, errors can originate from two sources: Left overflow and remaining ones in the shift register.
At the left overflow, the shift register is empty (all-zeros), i.e. the FSM is in state $S_0$, and the input pattern $A[i] = 0$ and $B[i] = 1$ occurs. 
The probability for this error event can be expressed as
\begin{align}
  P_{e,0} = P_0 P_B (1-P_A).
\end{align}

where $P_0$ denotes the probability of the zero state of the FSM, i.e. an all-zero shift register (cf. \eqref{eq:sc_max_fsm_state_general_new}). The corresponding expected number of erroneously added ones in the output stream can be calculated by
\begin{align}
  E_a = N P_{e,0}.
  \label{eq:exp_no_left_overflow}
\end{align}

For the error caused by the remaining ones in the shift register we compute the expected value of the shift register state,  
i.e. the expected number of ones in the shift register, as
\begin{align}
  E_r = \sum_{i=0}^{M-1} i P_i,
  \label{eq:exp_no_remaining_bits}
\end{align}

where $P_i$ denotes the probability of the FSM to be in the~$i$th~state. Note that the expected number of ones in the shift register corresponds to the number of ones that are missing on average in the output stream. 
Combining \eqref{eq:exp_no_left_overflow}  and \eqref{eq:exp_no_remaining_bits} and considering that left overflows add ones to the output stream, while the remaining bits in the shift register are the missing ones in the output stream, the expected number of erroneous ones is given by
\begin{align}
  E_{e, a > b} = E_a -E_r = N P_{e,0} - \sum_{i=0}^{M-1} i P_i.
\end{align}

Finally, the error probability can be expressed as
\begin{align}
  P_{e, a > b, N} = \frac{\left|E_{e, a < b}\right|}{N} = \left|P_{e,0} - \frac{1}{N} \sum_{i=0}^{M-1} i P_i\right|.
\label{eqn:Peba}
\end{align}

Interestingly, the second term in \eqref{eqn:Peba} depends on the stream length $N$, which goes to zero for $N \rightarrow \infty$. 
This is because a finite number of missing bits has a higher impact on the error for shorter streams than for longer streams.

\subsection{Expected Error Probability}
Assuming a uniform distribution for $a$ and $b$ over $[0,1]$ and using (\ref{eqn:Peab}) and (\ref{eqn:Peba}), the expected error probability can be derived as follows
\begin{align}
  E(P_{e, N})  = \int\limits_0^1  \left( \int\limits_0^b P_{e, a \leq b}  \text{d}a \right ) \text{d}b + \int\limits_0^1 \left( \int\limits_0^a P_{e, a > b,N}  \text{d}b \right ) \text{d}a.
  \label{eq:tot_prob_err}
\end{align}
The two integrals in \eqref{eq:tot_prob_err}, cover exactly half of the two dimensional space $[0,1] \times [0,1]$. Thus, 
they form the expected value over the whole space. 

Unfortunately, to the best of our knowledge, for this integral no closed-form solution exists. Thus, we calculated it through numerical integration.
Fig.~\ref{fig:ExpectedError} shows the error probabilities obtained through numerical integration and the simulation results. For each simulated point, the empirical error probability was averaged over $10000$ test cases. We observe a good match between the theoretical and the simulation results. However, the analytical results are obtained much faster than the simulation results. 
Moreover, it can be seen that for a certain stream length $N$ there exists an optimal shift register length $L_\text{opt}$. For example, for~$N=10^{4}$ the error probability decreases until length~$15$ and then increases due to the remaining bits 
in the shift register. Thus, for $N=10^{4}$ the optimal shift register length is given by~$L_\text{opt}=15$. In Fig.~\ref{fig:ExpectedError}, we marked the optimal shift register lengths with triangles and added extra ticks at the x-axis. The lower bound curve in Fig.~\ref{fig:ExpectedError} corresponds to the performance limit if the bit stream length $N$ goes to infinity. It can either be obtained by numerically integrating~(\ref{eqn:ExpMInf})~or~(\ref{eq:tot_prob_err}), the latter with $N \rightarrow \infty$. For the latter approach, the second term on the right hand side of (\ref{eqn:Peba}) goes to zeros, resulting in~$P_{e,a>b, N \rightarrow \infty} = P_{e,0}$. This is because when the stream length goes to infinity, a finite number of remaining bits in the shift register does not matter.

\begin{figure}[t]
\begin{center}
\vspace{-.03cm}
\begin{tikzpicture}
\begin{semilogyaxis}[compat=newest, 
width=.9\columnwidth, height =.7\columnwidth,log basis y=10, grid,
ylabel={ $E(P_e)$ }, 
xlabel={ Shift Register Length $L$}, 
ymin = 3e-5,
legend style={at={(0.845,1.0)}},
legend columns = {2},
ymax = .4,
legend cell align=left,
extra x ticks={6,15,22,27,34},
extra x tick labels={\scriptsize \hspace{-2mm} $6$, \scriptsize $15$,\scriptsize $22$,\scriptsize $27$,\scriptsize $34$},
extra x tick style={ grid=none,
xticklabel style={xshift = 5, yshift=12}},
clip marker paths=true
]
\draw[densely dotted, thin, color=gray] (6,4.1274801e-03) --   (6,4e-5) ;
\draw[densely dotted, thin, color=gray] (15,1.0299568e-03) -- (15,4e-5) ;
\draw[densely dotted, thin, color=gray] (22,5.1816724e-04) -- (22,4e-5) ;
\draw[densely dotted, thin, color=gray] (27,3.7486047e-04) -- (27,4e-5) ;
\draw[densely dotted, thin, color=gray] (34,2.4078136e-04) -- (34,4e-5) ;

\addplot[color=Red,very thick] table[x index =0, y index =1] {./p_only_1000_10000_30000_50000_100000_300000.dat};
\addlegendentry{ \scriptsize $N=1000$}
\addplot[color=Green,very thick] table[x index =0, y index =2] {./p_only_1000_10000_30000_50000_100000_300000.dat};
\addlegendentry{ \scriptsize $N=10000$}
\addplot[color=Blue,very thick] table[x index =0, y index =3] {./p_only_1000_10000_30000_50000_100000_300000.dat};
\addlegendentry{ \scriptsize $N=30000$}
\addplot[color=Black,very thick] table[x index =0, y index =4] {./p_only_1000_10000_30000_50000_100000_300000.dat};
\addlegendentry{ \scriptsize $N=50000$}
\addplot[color=MidnightBlue,very thick] table[x index =0, y index =5] {./p_only_1000_10000_30000_50000_100000_300000.dat};
\addlegendentry{ \scriptsize $N=100000$}

\addplot[color=Red,very thick, only marks, mark=+,forget plot] table[x index =0, y index =1] {./e_1000_10000_30000_50000_100000__10000.dat};
\addplot[color=Green,very thick, only marks, mark=+,forget plot] table[x index =0, y index =2] {./e_1000_10000_30000_50000_100000__10000.dat};
\addplot[color=Blue,very thick, only marks, mark=+,forget plot] table[x index =0, y index =3] {./e_1000_10000_30000_50000_100000__10000.dat};
\addplot[color=Black,very thick, only marks, mark=+,forget plot] table[x index =0, y index =4] {./e_1000_10000_30000_50000_100000__10000.dat};
\addplot[color=MidnightBlue,very thick, only marks, mark=+,forget plot] table[x index =0, y index =5] {./e_1000_10000_30000_50000_100000__10000.dat};
\node[regular polygon,regular polygon sides=3, fill,color=Red,inner sep=0pt,minimum size=2.5mm] at (axis cs:6,4.1274801e-03) {};
\node[regular polygon,regular polygon sides=3,fill,color=Green,inner sep=0pt,minimum size=2.5mm] at (axis cs:15,1.0299568e-03) {};
\node[regular polygon,regular polygon sides=3,fill,color=Blue,inner sep=0pt,minimum size=2.5mm] at (axis cs:22,5.1816724e-04) {};
\node[regular polygon,regular polygon sides=3,fill,color=Black,inner sep=0pt,minimum size=2.5mm] at (axis cs:27,3.7486047e-04) {};
\node[regular polygon,regular polygon sides=3,fill,color=MidnightBlue,inner sep=0pt,minimum size=2.5mm] at (axis cs:34,2.4078136e-04) {};

\addplot[color= 	Magenta,very thick, densely dotted] table[x index =0, y index =3] {./ApproxErrorAnalysisVarAscii_overSR_51.dat};
\addlegendentry{ \scriptsize lower bound }

\end{semilogyaxis}
\end{tikzpicture}
\caption{ Expected error probability versus shift register length $L$ for different bit stream lengths $N$. Solid lines: theoretical results (\ref{eq:tot_prob_err}); markers~(+): bit-wise simulation; markers ($\blacktriangle$): optimal shift register lengths~$L_\text{opt}$. \label{fig:ExpectedError}}
\end{center}
\end{figure}
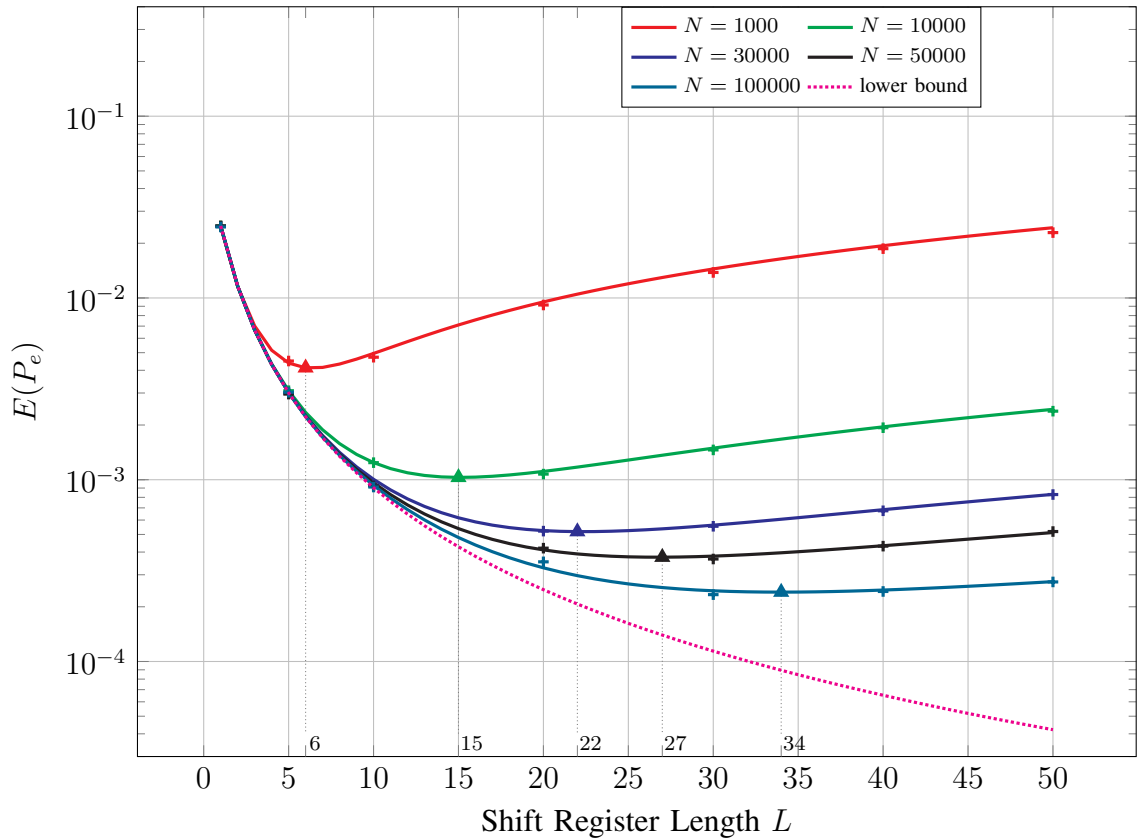

\section{Conclusions}
\label{sec:concl}

In this work, we investigated the stochastic SMax/SMin function, which is 
an important building block in many applications (e.g., max pooling in neural networks). 
Prior works have proposed circuits for the SMax/SMin function and provided 
empirical validation. In this paper, we analytically proved the correctness of these architectures.
Moreover, we proposed a novel shift-register based SMax/SMin function, which outperforms the 
state-of-the-art architectures in terms of accuracy, while having comparable hardware cost.
We provided a new error analysis of the proposed circuit, considering the value of the individual 
bits in the stochastic stream. This analysis revealed that for practical bit stream lengths 
a finite optimal shift register length exists. Moreover, we showed that increasing the shift register length 
beyond the optimal value deteriorates the accuracy. This is due to the error caused by the remaining bits in the shift register. 
Hence, finding strategies to empty the shift register might be an interesting future extension of this work.

\balance

\ifdefined\ACK
  \section{Acknowledgment}
  \input{Acknowledgment}
\fi


\bibliographystyle{IEEEtran}
\bibliography{IEEEabrv,References}

\end{document}